\newcommand\sfS{\mathsf{S}}
\definecolor{darkred}{rgb}{0.65,0,0}
\numberwithin{figure}{section}
\numberwithin{equation}{section}
\theoremstyle{plain}
\newtheorem*{thm*}{Theorem}
\newtheorem{thm}{Theorem}[section]
\newtheorem{prop}[thm]{Proposition}
\theoremstyle{remark}
\newtheorem{remark}[thm]{Remark}
\theoremstyle{plain}
\theoremstyle{definition}
\newcommand\cyl{{\rm cyl}}
\newcommand\omg{{\omega}}
\newcounter{counter_a}
\newenvironment{myenum}{\begin{list}{{\rm(\roman{counter_a})}}%
		{\usecounter{counter_a}
			\setlength{\itemsep}{1.ex}\setlength{\topsep}{0.8ex}
			\setlength{\leftmargin}{5ex}\setlength{\labelwidth}{5ex}}}{\end{list}}
\newcommand\wt{\widetilde}
\newcommand\lm{\lambda}
\newcommand\p{\partial}
\def\sfW{{\mathsf W}}
\def\sfM{{\mathsf M}}
\def\sfJ{{\mathsf J}}
\def\sfV{{\mathsf V}}
\newcommand{\bx}{x}
\newcommand{\be}{\mathbf{e}}
\newcommand{\bd}{\mathbf{d}}
\newcommand\ii{{\mathsf{i}}}
\newcommand\dl{\delta}
\newcommand\s{\sigma}
\newcommand\tm{\times}
\newcommand{\dC}{\mathbb{C}}
\newcommand{\dR}{\mathbb{R}}
\newcommand{\dT}{\mathbb{T}}
\newcommand{\dI}{\mathbb{I}}
\newcommand{\dZ}{\mathbb{Z}}
\newcommand{\dN}{\mathbb{N}}
\newcommand{\sfD}{\mathsf{D}}
\newcommand\arr{\rightarrow}
\newcommand\dd{{\mathsf{d}}}
\newcommand{\spn}{\mathsf{span}\,}
\def\cF{{\mathcal F}}
\def\cE{{\mathcal E}}
\def\cD{{\mathcal D}}
\def\cS{{\mathcal S}}
\newcommand\dom{\mathrm{dom}\,}
\renewcommand\tt{\theta}
\newcommand\Op{\sfD_\omega}
\newcommand\wOp{{\wt\sfD}_\omega}
\title[Self-adjoint extensions of the two-valley Dirac operator]{Self-adjoint extensions of the two-valley Dirac operator
with {discontinuous} infinite mass boundary conditions}
\author{Biagio Cassano}
\address{Department of Mathematics, %
  Universit\`{a} degli Studi di  Bari,%
  via Edoardo Orabona 4, 70125, Bari, Italy}
\email{biagio.cassano@uniba.it}
\author{Vladimir Lotoreichik}
\address{Department of Theoretical Physics, Nuclear Physics Institute, 	Czech Academy of Sciences, 25068 \v Re\v z, Czech Republic}
\email{lotoreichik@ujf.cas.cz}
\urladdr{http://gemma.ujf.cas.cz/~lotoreichik}
\subjclass[2010]{35P05, 35Q40, 81Q10}
\begin{document}

\keywords{Dirac operator, infinite mass boundary condition, wedge, self-adjoint extensions, mixing the valleys}
\begin{abstract}
  We consider the four-component two-valley  Dirac operator 
  on a wedge in $\dR^2$
  with infinite mass boundary conditions, which enjoy a flip at the
  vertex. We show that it has deficiency indices $(1,1)$ and we parametrize
  all its self-adjoint 
  extensions, relying on the fact that the underlying
  two-component Dirac operator is symmetric with deficiency indices $(0,1)$. The respective
  defect element is computed explicitly. 

  We observe that there exists no self-adjoint extension, which
  can be decomposed into an orthogonal sum of two two-component operators. In physics, 
  this effect is called \emph{mixing the valleys}.
\end{abstract}			
		\maketitle
\section{Introduction}

The dynamics of low-energy electrons in graphene is effectively described by a Hamiltonian associated to the matrix
differential expression 
\[
	\mathcal{M}
	= \begin{pmatrix}\cD & 0\\
	0 & \cD
	\end{pmatrix},
\]
where $\cD$ is the two-component Dirac
differential expression in two dimensions. 
Such a Hamiltonian takes into account contributions from the
two inequivalent Dirac points (or valleys) of the first Brillouin
zone associated to the underlying hexagonal lattice. The respective components of a wavefunction describe the electronic density
on each of the two triangular sublattices that constitute the
honeycomb lattice.
In order to define rigorously the operator
associated to $\mathcal{M}$, appropriate boundary
conditions have to be imposed, and its domain of self-adjointness has
to be determined. In many applications the two valleys are decoupled and the description is reduced to the study of an operator associated to $\cD$ only. However, interactions that mix the valleys may indeed occur in graphene~\cite{TTT06}
and the effects produced by them are often
appearing under the name \emph{valleytronics}; see~\cite{NGPNG09}
and the references therein.
In this paper we consider a discontinuous \emph{infinite mass}
boundary condition and, in order to get self-adjointness for the operator
associated to
$\mathcal{M}$, it is necessary to couple the two valleys.

Following our program, we  investigate the  two-dimensional massless Dirac operator
with {discontinuous} infinite mass boundary conditions on a wedge in the situation when the boundary condition undergoes a flip at the vertex.
This problem can be regarded as a counterpart of the analysis in~\cite{LO18, PV19} 
for a similar problem without a flip.
Following the strategy of \cite{LO18}, in order to obtain the main result we rely on
separation of the variables
and subsequent careful analysis of the one-dimensional fiber operators.
We would like to emphasize that the observed effect is essentially not
caused by the corner of the wedge,
because it persists even if the flip happens on the half-plane.
In this respect it is reminiscent of a similar effect for the Robin
Laplacian with the coefficient having a linear singularity at a boundary point~\cite{ES88, MR09, NP18}.
We expect that relying on the localisation technique given in~\cite{NP18},
our results
can be generalized for 
operators on smooth planar domains
and even on curvilinear polygons,
having (finitely many) flips of the boundary condition. The literature on Dirac operators
with infinite mass boundary conditions on domains {is} quite extensive; see {\it e.g.} \cite{ALTR17, AMV14, BEHL18, BHOP19, HOBP18, OBV18}, the review papers~\cite{BEHL,OP19}, and the references therein.

To describe our main result we need to introduce some notations. In what follows, we consider a wedge: 
\begin{equation}\label{key}
	\cS_\omg
	:= 
	\big\{(r\cos\tt,r\sin\tt)\in\dR^2\colon 
	r >0,
	\tt \in \dI_\omg\big\}
	\subset \dR^2,
\end{equation}
where $\dI_\omg := (-\omg,\omg)$ with $\omg \in (0,\pi)$. 
The value $2\omg$ can be viewed as the opening angle of the wedge $\cS_\omg$.
The opposite sides of the wedge $\cS_\omg$
are denoted by
\[
	\Gamma_\omg^\pm 
		:= 
	\{(r\cos\omg,\pm r\sin\omg)\in\dR^2\colon 
	r >0\}.
\]
Clearly, the choice $\omg = \frac{\pi}{2}$
corresponds to the half-plane.

Recall that the $2\tm 2$ Hermitian 
\emph{Pauli matrices} $\s_1,\s_2,\s_3$ are given by
\[
\s_1 =  \begin{pmatrix}0&1\\1&0\end{pmatrix},
\qquad
\s_2  = \begin{pmatrix}0&-\ii\\ \ii&0\end{pmatrix}
\quad \text{and} \quad 
\s_3 = \begin{pmatrix}1&0\\0&-1\end{pmatrix}. 
\]
For $i,j\in\{1,2,3\}$, they satisfy the anti-commutation relation
$\s_j\s_i + \s_i\s_j = 2 \dl_{ij}$,
where $\dl_{ij}$ is the Kronecker symbol. 
For the sake of convenience, we define 
$\s := (\s_1,\s_2)$ and for 
$\bx = (x_1, x_2)^\top\in\dR^2$ we set
\[
	\s\cdot \bx 
	:= 
	x_1 \s_1 + x_2\s_2 
	= 
	\begin{pmatrix} 
	0 & x_1 -\ii x_2\\
	x_1 + \ii x_2 & 0
	\end{pmatrix}.
\]
Consider the following matrix differential expression
\[
	\cD := -\ii (\s\cdot\nabla) =
	\begin{pmatrix} 
	0                    & -\ii(\p_1 - \ii\p_2)
	\\
	-\ii(\p_1 + \ii\p_2) & 0
	\end{pmatrix}. 
\]
%
%
The subject of our analysis is the Dirac operator $\Op$ in the Hilbert
space $L^2(\cS_\omg;\dC^2)$, defined as follows:
\begin{equation}\label{eq:Op}
\begin{aligned}
	\Op u & := \cD u,\\
	\dom\Op & := 
	\left\{
		u = \begin{pmatrix}u_1\\u_2\end{pmatrix}\in H^1(\cS_\omg;\dC^2)\colon 
		\begin{matrix}
		u_2|_{\Gamma_\omg^+} 
		=	
		- e^{+\ii\omg}u_1|_{\Gamma_\omg^+}\\
		u_2|_{\Gamma_\omg^-} 
		=	
		- e^{-\ii\omg}u_1|_{\Gamma_\omg^-}
		\end{matrix}
	\right\}.
\end{aligned}
\end{equation}
Denoting $\mathbf{n}:=\mathbf{n}(x)$ the outer unit normal at
the point $x \in \partial \cS_\omg \setminus \{0\} =
\Gamma_\omg^- \cup \Gamma_\omg^+$, an
explicit computation shows that the boundary conditions in
\eqref{eq:Op}
are equivalent to 
\begin{equation}\label{eq:infinite.mass}
  u =  \mp \ii \sigma_3 (\sigma \cdot \mathbf{n})
  \,
  u,
  \quad \text{ on } \Gamma_\omg^\pm.
\end{equation}
We remark that the standard realization of the Dirac operator on a wedge with
infinite mass boundary conditions prescribes that
\begin{equation*}
  u =   - \ii \sigma_3 (\sigma \cdot \mathbf{n}) \, u,
  \quad \text{ on } \partial\cS_\omg,
\end{equation*}
while in \eqref{eq:infinite.mass} there is a flip between the boundary conditions
imposed on the opposite sides $\Gamma_\omg^\pm$ of the wedge.
Equivalently, in order to get standard infinite mass boundary conditions one should replace
the second condition
$u_2|_{\Gamma_\omg^-}
=	
- e^{-\ii\omg}u_1|_{\Gamma_\omg^-}$ in~\eqref{eq:Op}
 by $u_2|_{\Gamma_\omg^-} 
=	
e^{-\ii\omg}u_1|_{\Gamma_\omg^-}$.

We show in Proposition~\ref{prop:symmetric}
that the operator $\Op$ is symmetric.
Our first main result concerns the deficiency
indices and subspaces of $\Op$.
\begin{thm}\label{thm1}
	Let the symmetric operator $\Op$ be as in~\eqref{eq:Op}. Then the following properties hold.
	\begin{myenum}
		\item $\Op$ has deficiency indices $(0,1)$.\footnote{For $\sfS\subset\sfS^*$ we 
			adopt the convention
			$n_+(\sfS) := \dim\ker(\sfS^* - \ii)$
			and $n_-(\sfS)  
			:= \dim\ker(\sfS^* + \ii)$.
	The deficiency indices of $\sfS$
	are given by $(n_+(\sfS),n_-(\sfS))$.	
	}
		\item $\ker(\Op^*+ \ii) = \spn\{u_\star\}$ and 
		the defect element is given in polar coordinates by
		\begin{equation}\label{eq:defect}
		u_\star(r,\tt) = \frac{1}{2\sqrt{\omg}}
		\frac{e^{-r}}{\sqrt{r}}
		\begin{pmatrix}
		e^{-\frac{\ii\tt}{2}}
		\\
		-
		e^{\frac{\ii\tt}{2}}	
		\end{pmatrix}.
		\end{equation}
	\end{myenum}	
\end{thm}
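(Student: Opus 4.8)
\textit{Proof proposal.} The plan is to compute $\ker(\Op^* \mp \ii)$ explicitly by separation of variables, in the spirit of \cite{LO18}. The first step is to describe $\Op^*$: it acts as the differential expression $\cD$, it imposes the infinite-mass conditions \eqref{eq:infinite.mass} on $\Gamma_\omg^\pm$ away from the vertex (these are self-adjoint boundary conditions, so they contribute no deficiency on the smooth part of the boundary), and it relaxes only the $H^1$-regularity requirement, allowing a mild singularity at the vertex. Consequently every defect element is an $L^2$-solution $u$ of $\cD u = \pm\ii u$ satisfying \eqref{eq:infinite.mass}, and the entire deficiency is produced by the singular behaviour admissible at the corner.

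Next I would pass to polar coordinates and apply the gauge transformation $v_1 := e^{\ii\tt/2}u_1$, $v_2 := e^{-\ii\tt/2}u_2$, which is unitary on $L^2(\cS_\omg;\dC^2)$ (well defined since $\omg<\pi$). A direct computation shows that it turns $\cD$ into the canonical radial Dirac expression $\wt\cD = -\ii\s_1(\p_r + \tfrac{1}{2r}) - \tfrac{\ii}{r}\s_2\p_\tt$ and, crucially, converts the flipped conditions into the \emph{symmetric} condition $v_1 + v_2 = 0$ at $\tt = \pm\omg$. Since $\cD$ commutes with the angular momentum $-\ii\p_\tt+\tfrac12\s_3$, a solution decomposes into modes $u_1 = \sum_\kp A_\kp(r)e^{\ii(\kp - \frac12)\tt}$, $u_2 = \sum_\kp B_\kp(r)e^{\ii(\kp+\frac12)\tt}$, each of which solves $\cD u = \lm u$ separately. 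Using $\cD^2 = -\Delta$, each component satisfies the modified Helmholtz equation $(\Delta - 1)u_j = 0$, so the decaying radial profiles are modified Bessel functions $A_\kp \propto K_{|\kp-\frac12|}(r)$ and $B_\kp \propto K_{|\kp+\frac12|}(r)$; square-integrability at the vertex then forces $|\kp|<\tfrac12$.

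The heart of the argument is the boundary matching. Imposing $v_1 + v_2 = 0$ at $\tt = \pm\omg$ and grouping terms by Bessel order (the radial profiles of distinct orders being linearly independent) reduces, for each pair of modes $\pm\kp$, to a homogeneous $2\tm2$ system whose determinant is proportional to $\sin(2\kp\omg)$. For $\kp\in(0,\tfrac12)$ and $\omg\in(0,\pi)$ this never vanishes, so all modes with $\kp\neq0$ are killed and only the $\tt$-independent channel $\kp=0$ survives. For $\kp=0$ the condition $v_1+v_2=0$ becomes $B_0 = -A_0$, and the radial system collapses to the single first-order ODE $A_0' + \tfrac{1}{2r}A_0 = -\ii\lm A_0$, with solution $A_0(r) = C\,r^{-1/2}e^{-\ii\lm r}$. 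The underlying second-order (modified Bessel) equation is identical for both signs $\lm=\pm\ii$; it is the first-order Dirac coupling that decides which radial solution is retained, and this is the source of the asymmetry. Indeed $A_0$ is square-integrable precisely when $\lm=-\ii$ (giving $e^{-r}$) and fails to be when $\lm=+\ii$ (giving $e^{r}$). Undoing the gauge transformation, $\ker(\Op^*-\ii)=\{0\}$ and $\ker(\Op^*+\ii)=\spn\{u_\star\}$ with $u_\star$ as in \eqref{eq:defect}, whence the deficiency indices $(0,1)$; the constant $C$ is fixed by normalization.

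The step I expect to be the main obstacle is the rigorous justification that the angular-mode expansion exhausts all defect elements and that the side boundary conditions decouple across modes — that is, controlling the vertex behaviour carefully enough to restrict the analysis to the finite family $|\kp|<\tfrac12$ and to reduce the matching to the $\sin(2\kp\omg)$ determinant, together with the characterization of $\dom\Op^*$ used in the first step. It is worth noting that this determinant singles out $\kp=0$ for \emph{every} $\omg\in(0,\pi)$, in particular for $\omg=\tfrac{\pi}{2}$: this matches the stated observation that the effect persists on the half-plane and is therefore due to the flip of the boundary condition rather than to the opening of the wedge.
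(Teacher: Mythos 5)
Your computations are sound, and your route, while in the same separation-of-variables spirit as the paper, is executed quite differently. The gauge transformation $v_1 = e^{\ii\tt/2}u_1$, $v_2 = e^{-\ii\tt/2}u_2$ does convert the flipped conditions into $v_1+v_2=0$ on both rays, and your matching determinant is correct: for $\lm=-\ii$ the decaying profiles in the $\pm\kp$ pair are $(A_\kp,B_\kp)=c\,\bigl(K_{\kp-\frac12},-K_{\kp+\frac12}\bigr)$ and $(A_{-\kp},B_{-\kp})=d\,\bigl(K_{\kp+\frac12},-K_{\kp-\frac12}\bigr)$, the two equations per boundary ray collapse to one per Bessel order, and the resulting $2\times2$ system in $(c,d)$ has determinant $-2\ii\sin(2\kp\omg)$, nonzero for $0<\kp<\frac12$ and $\omg\in(0,\pi)$; the $\kp=0$ channel then reproduces $u_\star$ exactly, and the $e^{\mp r}$ asymmetry gives the indices $(0,1)$. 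The paper, by contrast, never solves the defect equation with Bessel functions: it decomposes the symmetric operator $\Op$ itself into half-line fibers and computes $n_\pm(\Op)=\sum_{k}n_\pm(\bd_k)$, where the $\bd_k$, $k\ge1$, are self-adjoint on $H^1_0(\dR_+;\dC^2)$ by \cite{CP2018} (since $\pi k/(2\omg)>\frac12$) and $\bd_0=\ii\,\frac{\dd}{\dd r}$ on $H^1_0(\dR_+)$ is the half-line momentum operator, whose indices $(0,1)$ and defect element $e^{-r}$ are elementary.

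The obstacle you flag is, however, a genuine gap, and it sits exactly where the paper's key lemma lives. Your angular operator $J=-\ii\p_\tt+\frac12\s_3$ commutes with $\cD$ but is \emph{not} self-adjoint under the flipped boundary conditions: both of its components have the same characteristic direction, while the conditions couple the two components at the same endpoint; indeed, for \emph{every} real $\kp$ the single $J$-mode with $B=-A$ is boundary-compatible (it is the radial first-order system, not the boundary condition, that eliminates it). Hence $J$ admits no spectral decomposition adapted to the problem, the discreteness of admissible $\kp$ is not a priori available, and the claim that every defect element ``decomposes into modes'' is unsupported; likewise your opening description of $\dom\Op^*$ (side conditions persist, only vertex regularity is relaxed) is asserted rather than proved, and the requisite trace regularity for adjoint-domain functions is nontrivial. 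The paper closes both holes at once with the spin-orbit operator $\sfJ_\omg=-\ii\s_3\p_\tt+\frac12$ of Proposition~\ref{prop:K}: under precisely the flipped conditions it is a momentum operator on a two-edge graph, hence self-adjoint with compact resolvent by \cite{E12}, with eigenvalues $\pi k/(2\omg)$ and an orthonormal eigenbasis $\{\phi_k\}$ of $L^2(\dI_\omg;\dC^2)$; the identity $(\s\cdot\be_{\rm rad})\phi_k=(-1)^{k+1}\phi_{-k}$ shows that the pairs $\cF_k\oplus\cF_{-k}$ (your $\pm\kp$ pairing) reduce $\wOp$, and the completeness of $\{\phi_k\}$ is exactly what guarantees that the expansion exhausts the defect space. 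One further domain fact, which your sketch uses implicitly, is that the fibered $H^1_{\rm cyl}$ condition forces the radial parts to vanish at $r=0$ (a Hardy-type argument via \cite[Prop.~2.2 and 2.4]{CP2018}), so that $\bd_0$ really is the momentum operator on $H^1_0(\dR_+)$ with indices $(0,1)$. To complete your proof you would either have to establish these completeness and trace statements directly, or --- the simpler repair --- replace $J$ by $\sfJ_\omg$ and run the deficiency computation fiberwise on the symmetric operator as in Proposition~\ref{prop:decomp}.
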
	
In order to prove Theorem~\ref{thm1}, we
take the advantage of the reformulation
in polar coordinates:
we decompose the operator $\Op$ into an orthogonal sum of infinitely many one-dimensional self-adjoint Dirac operators on the half-line and a momentum-type operator on the half-line, which has deficiency indices $(0,1)$ {and whose defect element
can be explicitly computed by solving an elementary first-order ODE.} 

The full four-component two-valley Dirac operator on a planar domain with infinite mass boundary conditions can be viewed as an orthogonal sum of two two-component (one-valley) Dirac operators
with infinite mass boundary conditions,
in which the unit normals are chosen to point
outwards and inwards, respectively. 
As previously mentioned, the analysis
reduces to the one-valley two-component
Dirac operator unless there is an additional ``off-diagonal" interaction, which
mixes the valleys. 

In our setting,
the two-component Dirac operator
associated with the first valley is precisely given by $\Op$, while the one associated with the second valley 
\[
	\left\{
	u = (u_1,u_2)^\top\in H^1(\cS_\omg;\dC^2)\colon 
	\begin{matrix}
	u_2|_{\Gamma_\omg^\pm} 
	=	
	e^{\pm\ii\omg}u_1|_{\Gamma_\omg^\pm}\\
	\end{matrix}
	\right\}\ni u\mapsto \cD u
\]
is
unitarily equivalent to $-\Op$ via
the Pauli matrix $\s_3$. Hence, 	
the two-valley Dirac operator
is unitarily equivalent to
\begin{equation}\label{eq:Op2valley}
	\sfM_\omg := \Op\oplus(-\Op).
\end{equation}
Clearly, the operator $\sfM_\omg$
is symmetric in $L^2(\cS_\omg;\dC^4)$.

Our second main result concerns
the characterisation of the self-adjoint
extensions for $\sfM_\omg$.
In our model,
mixing the valleys naturally enters
as a necessity to define a self-adjoint Hamiltonian through the coupling constant $\alpha {\in\dT}$, which parametrizes the extension. 
Moreover, this mixing {is inevitable}, since there is no self-adjoint extension of $\sfM_\omg$, which can be represented as an orthogonal sum of two Hamiltonians with respect to the decomposition $	L^2(\cS_\omg;\dC^4) = L^2(\cS_\omg;\dC^2)\oplus L^2(\cS_\omg;\dC^2)$. This mathematical observation still awaits a
thorough physical interpretation.
\begin{thm}\label{thm2}
	Let the symmetric operator $\Op$ be as in~\eqref{eq:Op} and let $u_\star$ be as
	in~\eqref{eq:defect}.
	Then the two-valley Dirac operator
	$\sfM_\omg = \sfD_\omg \oplus (-\sfD_\omg)$
	has deficiency indices $(1,1)$
	and all its self-adjoint extensions
	are given by 
	\[
	\begin{aligned}
	\sfM_{\alpha,\omg}& := 
	\begin{pmatrix}
	\cD u_1 + \ii u_\star\\
	-\cD u_2 - \ii \alpha u_\star
	\end{pmatrix},\\ 
	\dom\sfM_{\alpha,\omg}& := 
	\left\{
	\begin{pmatrix}
	u_1\\ u_2\end{pmatrix}
	+ 
	\begin{pmatrix} 
	u_\star \\ 
	\alpha u_\star
	\end{pmatrix}
	\colon
	u_1,u_2\in \dom\Op
	\right\},
	\end{aligned}
	\]
	where $\alpha\in\dT := \{z\in\dC\colon |z| = 1\}$ is an extension parameter.
\end{thm}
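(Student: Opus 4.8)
The plan is to reduce Theorem~\ref{thm2} entirely to Theorem~\ref{thm1} by exploiting the direct-sum structure $\sfM_\omg = \Op\oplus(-\Op)$, and then to invoke von Neumann's extension theory. First I would determine the adjoint and the two defect subspaces. With respect to the orthogonal decomposition $L^2(\cS_\omg;\dC^4) = L^2(\cS_\omg;\dC^2)\oplus L^2(\cS_\omg;\dC^2)$ one has $\sfM_\omg^* = \Op^*\oplus(-\Op^*)$. Since $(-\Op)^* = -\Op^*$, the elementary identity $\ker(-\Op^*\mp\ii) = \ker(\Op^*\pm\ii)$ shows that $-\Op$ has deficiency indices $(1,0)$ by Theorem~\ref{thm1}(i). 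Combining the two valleys and using Theorem~\ref{thm1}(ii) gives
\[
\ker(\sfM_\omg^*-\ii) = \{0\}\oplus\ker(\Op^*+\ii) = \spn\Big\{\begin{pmatrix}0\\ u_\star\end{pmatrix}\Big\},
\qquad
\ker(\sfM_\omg^*+\ii) = \ker(\Op^*+\ii)\oplus\{0\} = \spn\Big\{\begin{pmatrix}u_\star\\ 0\end{pmatrix}\Big\}.
\]
In particular both defect subspaces are one-dimensional, so $\sfM_\omg$ has deficiency indices $(1,1)$, which is the first assertion.

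Next I would apply von Neumann's theorem. Since $\sfM_\omg$ is a closed symmetric operator, with $\dom\overline{\sfM_\omg} = \dom\Op\oplus\dom\Op$, its self-adjoint extensions are in bijection with the unitary maps $V\colon\ker(\sfM_\omg^*-\ii)\to\ker(\sfM_\omg^*+\ii)$. As both spaces are spanned by a single copy of $u_\star$ of equal norm, every such $V$ is multiplication by a unimodular constant; parametrizing this constant suitably by $\alpha\in\dT$, the extension domain is $\dom\overline{\sfM_\omg}$ enlarged by the graph of $V$, whose generic element takes the form $(u_1,u_2)^\top + (u_\star,\alpha u_\star)^\top$ with $u_1,u_2\in\dom\Op$. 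The operator action on such an element is the restriction of $\sfM_\omg^*$, computed directly from $\Op^* u_\star = -\ii u_\star$, and reproduces the stated expression for $\sfM_{\alpha,\omg}$. Letting $\alpha$ range over $\dT$ produces pairwise distinct self-adjoint extensions, and by von Neumann's theorem these exhaust all of them.

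The analytically substantive input is entirely contained in Theorem~\ref{thm1}; the rest is careful bookkeeping, and I do not expect a genuine obstacle. The two points requiring attention are, first, confirming that $\Op$ is closed so that $\dom\overline{\sfM_\omg} = \dom\Op\oplus\dom\Op$ and the domain can indeed be written in terms of $\dom\Op$; and second, tracking the signs induced both by the factor $-\Op$ in the second valley and by $\Op^* u_\star = -\ii u_\star$, so that the unimodular parameter of $V$ is correctly identified with $\alpha$ and the defect contribution enters the two components in the stated proportion $1:\alpha$. As a by-product, the explicit shape of the defect element makes the mixing of the valleys transparent: for every $\alpha\in\dT$ the added element $(u_\star,\alpha u_\star)^\top$ has both components nonzero, so no self-adjoint extension can have a domain that splits as an orthogonal sum with respect to $L^2(\cS_\omg;\dC^2)\oplus L^2(\cS_\omg;\dC^2)$.
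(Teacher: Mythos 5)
Your proposal is correct and follows essentially the same route as the paper's proof: the defect subspaces $\ker(\sfM_\omg^*\mp\ii)=\spn\{(0,u_\star)^\top\}$, $\spn\{(u_\star,0)^\top\}$ are read off from Theorem~\ref{thm1} via the direct-sum structure, and the classification is then an application of von Neumann's theory, \emph{cf.}~\cite[Thm.~X.2]{reedsimon2}. Two of your side remarks even go slightly beyond the paper's write-up: the closedness of $\Op$ (needed so that $\dom\sfM_\omg$, rather than a closure, appears in the extension formula) is left implicit in the paper, but follows from its fiber decomposition, since $\bd_0$ is closed on $H^1_0(\dR_+)$ and the $\bd_k$, $k\in\dN$, are self-adjoint; and your observation that $(u_\star,\alpha u_\star)^\top$ always has both components nonzero is exactly the paper's ``mixing the valleys'' remark.

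One caveat on the sign bookkeeping that you flag but defer: carrying it out with your own (correct) identity $\Op^* u_\star=-\ii u_\star$ gives, for an element of $\dom\sfM_{\alpha,\omg}$,
\[
\sfM_\omg^*\begin{pmatrix}u_1+u_\star\\ u_2+\alpha u_\star\end{pmatrix}
=\begin{pmatrix}\cD u_1-\ii u_\star\\ -\cD u_2+\ii\alpha u_\star\end{pmatrix},
\]
so the defect terms enter with the \emph{opposite} signs to those in the displayed action. Since any self-adjoint extension of $\sfM_\omg$ must be a restriction of $\sfM_\omg^*$, your claim that the computation ``reproduces the stated expression'' does not hold literally; however, the discrepancy originates in the statement itself (the paper's proof quotes the identical formula), and it is a typo-level issue: the domains $\dom\sfM_{\alpha,\omg}$, the parametrization by $\alpha\in\dT$, and the exhaustiveness of the family are all unaffected, and no relabeling of $\alpha$ can reconcile the two sign conventions since the domain already pins down $\alpha$. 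Had you actually performed the check you describe, you would have caught this; as it stands, your argument is sound and coincides with the paper's modulo this inherited sign slip.
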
	
The proof of Theorem~\ref{thm2} rests upon
Theorem~\ref{thm1} and
classical von Neumann extensions theory; {\it cf}~\cite[{\S X.1}]{reedsimon2}.

\begin{remark}
It is not yet clear if there is a way to single out a distinguished
self-adjoint extension of $\sfM_\omg$.
In this respect, the analysis of the case
without a flip is different:
the two-dimensional Dirac operator is essentially self-adjoint whenever $0<\omg\leq \pi/2$
and for $\pi/2 < \omg < \pi$ it has a unique extension such that its domain
is included in $H^{\frac12}(\cS_\omg;\dC^2)$; {\it cf.}~\cite{LO18} for the
infinite mass boundary condition and~\cite{PV19} for more general
quantum-dot boundary conditions.
In our case, Theorem~\ref{thm2} shows that the regularity of the operator
domain can not be a criterion for selection,
because it is impossible to single out an extension requiring that its domain is included in a Sobolev space
$H^s(\cS_\omg;\dC^4)$, for some specific $s>0$. Indeed, in our setting for any $0<\omg<\pi$ all the extensions have a function in the domain that has a
singularity $\sim |x|^{-\frac12}$ {at} the origin. An analogous
phenomenon was observed in
\cite[{Rem. 1.10}]{CP2018} and \cite[{Rem. 1.11}]{CP2019}
{for Dirac operators with critical Coulomb-type
	spherically symmetric perturbations.}
\end{remark}
 
\subsection*{Organisation of the paper}
We prove in Section~\ref{sec:Radial.Op} that the operator $\Op$ is symmetric
and obtain its equivalent representation
in polar coordinates.
Then, we decompose
the operator $\Op$ into orthogonal sum of 
one-dimensional fiber operators in
Section~\ref{sec:orthogonal}.
Finally, Theorems~\ref{thm1} and~\ref{thm2} are proven in Section~\ref{sec:proof}.

\section{Preliminary analysis of $\Op$}\label{sec:Radial.Op}

\subsection{Symmetry}
In order to prove symmetry of $\Op$ we employ
integration by parts. 
Thanks to the specific
choice of the boundary condition, the boundary
term vanishes.

We denote
by $(\cdot,\cdot)_{\cS_\omg}$ the inner product in $L^2(\cS_\omg;\dC^2)$.
Note that all {the} inner products in the present paper
are linear in the first entry.
\begin{prop}\label{prop:symmetric}
	The operator $\Op$ is densely defined
	and symmetric in the Hilbert space $L^2(\cS_\omg;\dC^2)$.
\end{prop}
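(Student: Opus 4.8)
The plan is to treat the two assertions separately. Density of $\dom\Op$ is immediate, while symmetry will follow from a Green-type identity for $\cD$ whose boundary term vanishes thanks to the algebraic structure of the boundary condition~\eqref{eq:infinite.mass}.

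First, I would note that $C_c^\infty(\cS_\omg;\dC^2)\subset\dom\Op$: any smooth field with compact support in the open wedge has vanishing trace on $\Gamma_\omg^\pm$ and therefore satisfies the conditions in~\eqref{eq:Op} trivially. Since $C_c^\infty(\cS_\omg;\dC^2)$ is dense in $L^2(\cS_\omg;\dC^2)$, the operator $\Op$ is densely defined.

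For symmetry, I would fix $u,v\in\dom\Op$ and write $\langle\cdot,\cdot\rangle_{\dC^2}$ for the pointwise inner product on $\dC^2$. Using that the Pauli matrices are Hermitian, integration by parts for the expression $\cD=-\ii(\sigma\cdot\nabla)$ yields
\begin{equation}\label{eq:bdryterm}
(\cD u,v)_{\cS_\omg}-(u,\cD v)_{\cS_\omg}
=-\ii\int_{\p\cS_\omg}\big\langle(\sigma\cdot\mathbf{n})\,u,\,v\big\rangle_{\dC^2}\,\dd s,
\end{equation}
where $\mathbf{n}$ is the outer unit normal and $\dd s$ the arclength measure on $\p\cS_\omg\setminus\{0\}=\Gamma_\omg^-\cup\Gamma_\omg^+$. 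The claim thus reduces to the vanishing of the boundary integrand.

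To obtain this, I would exploit the involutive form~\eqref{eq:infinite.mass}. On $\Gamma_\omg^\pm$ set $B:=\mp\ii\sigma_3(\sigma\cdot\mathbf{n})$, so that the boundary condition reads $u=Bu$. From $(\sigma\cdot\mathbf{n})^2=1$ together with the anticommutation relations, which give $\sigma_3(\sigma\cdot\mathbf{n})\sigma_3=-(\sigma\cdot\mathbf{n})$, one checks that $B^2=1$ and, crucially, that
\[
B^*\,(\sigma\cdot\mathbf{n})\,B=-(\sigma\cdot\mathbf{n})
\]
on both sides, the two signs in $\mp$ being irrelevant because the unimodular prefactor only enters through $|\mp\ii|^2=1$. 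Hence, for $u=Bu$ and $v=Bv$, one has pointwise on $\p\cS_\omg$
\[
\big\langle(\sigma\cdot\mathbf{n})u,v\big\rangle_{\dC^2}
=\big\langle B^*(\sigma\cdot\mathbf{n})B\,u,v\big\rangle_{\dC^2}
=-\big\langle(\sigma\cdot\mathbf{n})u,v\big\rangle_{\dC^2},
\]
so the integrand in~\eqref{eq:bdryterm} vanishes identically and $(\cD u,v)_{\cS_\omg}=(u,\cD v)_{\cS_\omg}$.

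The step I expect to be the main obstacle is the rigorous justification of~\eqref{eq:bdryterm} on the \emph{unbounded} cornered domain $\cS_\omg$. For $\omg\in(0,\pi)$ the wedge is Lipschitz, so the trace theorem provides well-defined $H^{1/2}$ traces on $\Gamma_\omg^\pm$, and for $u,v\in H^1(\cS_\omg;\dC^2)$ the vector field with components $\langle u,\sigma_j v\rangle_{\dC^2}$, $j=1,2$, lies in $W^{1,1}(\cS_\omg)$ by the product rule and the Cauchy--Schwarz inequality; the divergence theorem then gives~\eqref{eq:bdryterm}. The single vertex and the behaviour at infinity cause no harm: the former has arclength measure zero and may be excised, while the latter is dealt with by exhausting $\cS_\omg$ with bounded Lipschitz subdomains and passing to the limit. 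Once~\eqref{eq:bdryterm} is secured, the pointwise cancellation above finishes the proof.
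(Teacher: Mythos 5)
Your proposal is correct and follows essentially the same route as the paper: density via $C_c^\infty(\cS_\omg;\dC^2)\subset\dom\Op$, the Green-type identity reducing symmetry to a boundary term, and the same algebraic cancellation — in the paper's notation, inserting $u=Bu$, $v=Bv$ with the Hermitian involution $B=\mp\ii\sigma_3(\sigma\cdot\mathbf{n})$ and using $(\sigma\cdot\mathbf{n})^2=\sigma_3^2=\mathbb{I}_2$ together with anticommutation to get $B^*(\sigma\cdot\mathbf{n})B=-(\sigma\cdot\mathbf{n})$, forcing the integrand to vanish. The only difference is presentational: where you sketch the justification of the Green identity on the unbounded Lipschitz wedge by hand (a $W^{1,1}$ divergence theorem plus exhaustion, which is a valid outline), the paper simply invokes \cite[Thm.~3.34 and 3.38]{McL} for epigraphs of globally Lipschitz functions.
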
	
\begin{proof}
  The operator is densely defined in $L^2(\cS_\omg;\dC^2)$,
  because $C^\infty_0(\cS_\omg;\dC^2)\subset\dom\Op$ is dense in $L^2(\cS_\omg;\dC^2)$.
  Since $\cS_\omg$ is the epigraph of a globally Lipschitz function, it is straightforward to derive
  from~\cite[Thm. 3.34 and 3.38]{McL} that the Green's identity
  \[
  		\int_{\cS_\omg} (-\ii\sigma\cdot\nabla) u \cdot \overline{v} \, \dd x
  		-
  		\int_{\cS_\omg}  u \cdot \overline{(-\ii\sigma\cdot\nabla)v} \, \dd x
  		\\
  		=  -\ii \int_{\Gamma_\omg^+\cup\Gamma_\omg^-} ((\sigma\cdot \mathbf{n})
  		\, u) \cdot \overline{v} \, \dd s	
  \]
  holds for all $u,v\in H^1(\cS_\omg;\dC^2)$; {\it cf.}~\cite[Lem. 1.4\,(i)]{PV19} for the same formula on bounded piecewise-$C^1$ domains.
  Hence, for any $u,v\in \dom\Op$ we have that
  \begin{equation}\label{eq:symmetry}
      (\Op u,v)_{\cS_\omg} - (u,\Op v)_{\cS_\omg}
      = -\ii \int_{\Gamma_\omg^+\cup\Gamma_\omg^-} ((\sigma\cdot \mathbf{n})
      \, u) \cdot \overline{v} \, \dd s\\
  \end{equation}
  Thanks to
  the boundary conditions \eqref{eq:infinite.mass},
  we have that
  \begin{equation*}
    \begin{split}
    \int_{\Gamma_\omg^+\cup\Gamma_\omg^-} ((\sigma\cdot \mathbf{n})
    \, u) \cdot \overline{v} \, \dd s
    = &
    \int_{\Gamma_\omg^+} (\sigma \cdot \mathbf{n})
    (-\ii \sigma_3 (\sigma \cdot \mathbf{n}))
    u \cdot \overline{(-\ii \sigma_3 (\sigma \cdot \mathbf{n}))v}
    \, \dd s
    \\
    & +
    \int_{\Gamma_\omg^-} (\sigma \cdot \mathbf{n})
    (\ii \sigma_3 (\sigma \cdot \mathbf{n}))
    u \cdot \overline{(\ii \sigma_3 (\sigma \cdot \mathbf{n}))v}
    \, \dd s.
  \end{split}
\end{equation*}
Since $\pm \ii \sigma_3 (\sigma\cdot \mathbf{n})$ are symmetric
$\dC^{2\times 2}$ matrices, we have that
  \begin{equation*}
    \begin{split}
    \int_{\Gamma_\omg^+\cup\Gamma_\omg^-} ((\sigma\cdot \mathbf{n})
    \, u) \cdot \overline{v} \, \dd s
    = &
    -\int_{\Gamma_\omg^+\cup \Gamma_\omg^+} \sigma_3 (\sigma \cdot
    \mathbf{n})
    (\sigma \cdot \mathbf{n})
    \sigma_3 (\sigma \cdot \mathbf{n})
    u \cdot \overline{v}
    \, \dd s
    \\
    = &
    - \int_{\Gamma_\omg^+\cup\Gamma_\omg^-} ((\sigma\cdot \mathbf{n})
    \, u) \cdot \overline{v} \, \dd s,
     \end{split}
   \end{equation*}
   where in the last equality we have used the fact that $(\sigma\cdot\mathbf{n})^2 = \sigma_3^2 =
   \mathbb{I}_2$.
   We conclude that the right hand side in \eqref{eq:symmetry} vanishes, and
   consequently that $\Op$ is symmetric.
\end{proof}

\subsection{Representation in polar coordinates}\label{sec:Op} 
Let us  introduce polar coordinates $(r,\tt)$
on $\cS_\omg$. They are related to the Cartesian
coordinates $\bx = (x_1,x_2)$ \emph{via} the identities
\[
	\bx(r,\tt) = 
	\begin{pmatrix}
	x_1(r,\tt)\\
	x_2(r,\tt)
	\end{pmatrix},\quad\text{where}\quad x_1 = x_1(r,\tt) = r\cos\tt,\quad
	x_2 = x_2(r,\tt) = r\sin\tt,
\]
for all $r > 0$ and $\tt\in \dI_\omg = (-\omg,\omg)$.
Further, we consider the moving frame 
$(\be_{\rm rad}, \be_{\rm ang})$ associated with the polar coordinates
\[
	\be_{\rm rad}(\tt) 
	=
	\frac{\dd \bx}{\dd r}
	= 
	\begin{pmatrix}
		\cos\tt\\
		\sin\tt
	\end{pmatrix}
	\quad\text{and}\quad
	\be_{\rm ang}(\tt) 
	= 
	\frac{\dd \be_{\rm rad}}{\dd\tt}
	 = 	
	\begin{pmatrix}
	-\sin\tt\\
	\cos\tt
	\end{pmatrix}.
\]
The Hilbert space $L^2_\cyl(\cS_\omg;\dC^2) := L^2(\dR\times \dI_\omg,\dC^2;r\dd r \dd \theta)$ can be viewed
as the tensor product $L^2_r(\dR_+)\otimes L^2(\dI_\omg;\dC^2)$,
where the weighted $L^2$-space $L^2_r(\dR_+)$ is defined as
\[
	L^2_r(\dR_+) = \left\{\psi\colon\dR_+\arr\dC\colon \int_{\dR_+}|\psi|^2r\dd r < \infty\right\}. 
\]
Let us consider the unitary transform
\[
	\sfV \colon L^2(\cS_\omg;\dC^2) \arr L^2_\cyl(\cS_\omg;\dC^2),\qquad 
	(\sfV v)(r,\theta) = u\big(r\cos\theta,r\sin\theta\big),
\]
and introduce the cylindrical Sobolev space by
\[
	H^1_\cyl(\cS_\omg;\dC^2) 
	:= \sfV\big(H^1(\cS_\omg;\dC^2)\big) 
	 = 
	{\Big\{v \colon v,\p_r v, r^{-1}(\p_\theta v) \in L^2_\cyl(\cS_\omg;\dC^2)\Big\}}.
\]
We consider the operator acting in the Hilbert space $L^2_\cyl(\cS_\omg;\dC^2)$
{and} defined as
\begin{equation}\label{eqn:unitequivD}
	\wOp := \sfV \Op \sfV^{-1},
	\qquad 
	\dom\wOp := \sfV\big(\dom\Op\big).
\end{equation}
Now, let us compute the action of $\wOp$ on a function $v\in \dom\wOp$. First, notice that there exists {a unique} $u\in \dom\Op$ such that $v = \sfV u$ and the partial derivatives of $u$ with respect to the Cartesian variables $(x_1,x_2)$
can be expressed through those of $v$ with respect to  polar variables $(r,\tt)$  via the standard relations (for $\bx = \bx(r,\tt)$)
\[
\begin{aligned}
	(\p_1 u)(\bx) & = 
	\cos\tt (\p_r v)(r,\tt) - \sin\tt\frac{(\p_\tt v)(r,\tt)}{r},\\
	(\p_2 u)(\bx) &=
	\sin\tt (\p_r v)(r,\tt) + \cos\tt\frac{(\p_\tt v)(r,\tt)}{r}.
\end{aligned}
\]
Using the latter formul{\ae} we can express the action
of the differential expression $\cD = -\ii(\s\cdot\nabla)$ in polar coordinates as follows
(for $\bx = \bx(r,\tt)$)
\[
\begin{aligned}
	(\cD u)(x) 
	 = 
	-\ii\begin{pmatrix} 
		\p_1 u_2(\bx) - \ii \p_2 u_2(\bx)\\
		\p_1 u_1(\bx) + \ii \p_2 u_1(\bx)
	\end{pmatrix}
	 =
	-\ii\begin{pmatrix} 
		e^{-\ii\tt} (\p_r v_2)(r,\tt) -
		\ii e^{-\ii\tt}r^{-1}(\p_\tt v_2)(r,\tt) 
		\\
		e^{\ii\tt} (\p_r v_1)(r,\tt) 
		+ \ii e^{\ii\tt}r^{-1}(\p_\tt v_1)(r,\tt)
	\end{pmatrix}.
\end{aligned}                        
\]
Note that a basic computation yields
\begin{equation}\label{eq:matrix_identity}
	\s\cdot \be_{\rm rad} 
	= 
	\cos\tt\s_1 + \sin\tt\s_2 
	= 
	\begin{pmatrix}
		0           & e^{-\ii\tt}\\
		e^{\ii\tt}  &0
	\end{pmatrix}.
\end{equation}
Hence, the operator
$\wOp$ acts as
\begin{equation}\label{eqn:expopcyl}
\begin{aligned}
	\wOp v & = 
	-\ii(\s\cdot \be_{\rm rad})
	\left(\p_r v + \frac{v}{2r}- \frac{(-\ii\s_3\p_\tt +\frac12) v}{r}\right),\\
	\dom\wOp & = 
	\big\{ v\in H_{\rm cyl}^1(\cS_\omg;\dC^2)\colon v_2(\cdot,\pm\omg) = -e^{\pm\ii\omg} v_1(\cdot,\pm\omg)\big\}.
\end{aligned}
\end{equation}
\section{Orthogonal decomposition}\label{sec:orthogonal}
Now, we introduce an auxiliary
spin-orbit-type operator in the Hilbert space $(L^2(\dI_\omg;\dC^2), (\cdot,\cdot)_{\dI_\omg})$ 
as follows
\begin{equation}\label{eq:K}
\begin{aligned}
	\sfJ_\omg\phi 
	& = 
	-\ii\s_3\phi' + \frac{\phi}{2} =
	\begin{pmatrix}
	-\ii\phi_1' + \frac{\phi_1}{2}\\
	+\ii\phi_2' + \frac{\phi_2}{2}
	\end{pmatrix},\\[0.4ex]
	\dom\sfJ_\omg 
	& = 
	\Big\{\phi = (\phi_1,\phi_2)\in H^1(\dI_\omg;\dC^2)
	\colon \phi_2(\pm\omg) 
	= 
	-e^{\pm\ii\omg} \phi_1(\pm\omg)\Big\}.
\end{aligned}
\end{equation}
Let us investigate the spectral properties of $\sfJ_\omg$.
\begin{prop}\label{prop:K} 
	Let the operator $\sfJ_\omg$ be as in~\eqref{eq:K}.
	Then the following hold.
	\begin{myenum}
		\item\label{itm:K1} $\sfJ_\omg$ is self-adjoint and has a compact resolvent.
		\item\label{itm:K2} 
		$\s(\sfJ_\omg) = \left\{\lm_k\right\}_{k\in\dZ} = \left\{\frac{\pi k}{2\omg}\right\}_{k\in\dZ}$ 
		and $\cF_k := \ker\big(\sfJ_\omg - \lm_k\big) = 
		\spn\{\phi_k\}$, where 
		\begin{equation}\label{eq:defn.phi_k}
			\phi_k (\theta)=
				\frac{1}{2\sqrt{\omg}}
		\begin{pmatrix}
		e^{+\ii(\lm_k - \frac12)\tt}
		\\
		(-1)^{k+1}
		e^{-\ii(\lm_k - \frac12)\tt}	
              \end{pmatrix}
              ;
      \end{equation}
      moreover, $\{\phi_k\}_{k \in\dZ}$ is 
      {an} orthonormal basis of $L^2(\dI_\omg;\dC^2)$.
		\item $(\s\cdot \be_{\rm rad}) \phi_k =
		(-1)^{k+1} \phi_{-k}$
		{for all $k\in\dZ$.}
\end{myenum}
\end{prop}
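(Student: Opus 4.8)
The plan is to establish Proposition~\ref{prop:K} in three stages, treating each of the three claims in turn. The operator $\sfJ_\omg$ is a first-order differential operator on the bounded interval $\dI_\omg$ with a boundary condition coupling the two components, so I expect everything to be essentially explicit.

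\medskip

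\textbf{Self-adjointness and compact resolvent.} First I would verify that $\sfJ_\omg$ is symmetric by integration by parts on $\dI_\omg$: for $\phi,\psi\in\dom\sfJ_\omg$ the bilinear term $(\sfJ_\omg\phi,\psi)_{\dI_\omg} - (\phi,\sfJ_\omg\psi)_{\dI_\omg}$ reduces to a boundary contribution $-\ii[\overline{\psi_1}\phi_1 - \overline{\psi_2}\phi_2]_{-\omg}^{+\omg}$, and the boundary condition $\phi_2(\pm\omg) = -e^{\pm\ii\omg}\phi_1(\pm\omg)$ (and similarly for $\psi$) forces this to vanish, since $|{-e^{\pm\ii\omg}}|=1$ makes the two terms cancel at each endpoint. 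For self-adjointness I would compute the deficiency indices by solving $\sfJ_\omg^*\phi = \pm\ii\phi$ directly; since the maximal operator acts by the same differential expression on all of $H^1(\dI_\omg;\dC^2)$, the adjoint domain is a one-parameter-per-endpoint extension of $\dom\sfJ_\omg$, and a symmetric first-order system on a compact interval with a matching boundary condition at each endpoint is self-adjoint. Compactness of the resolvent follows from the compact embedding $H^1(\dI_\omg;\dC^2)\hookrightarrow L^2(\dI_\omg;\dC^2)$, as $\dI_\omg$ is bounded.

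\medskip

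\textbf{Spectrum and eigenfunctions.} Since the resolvent is compact, the spectrum is purely discrete, so I only need to solve the eigenvalue equation $\sfJ_\omg\phi = \lm\phi$ explicitly. Writing out the two scalar ODEs $-\ii\phi_1' + \tfrac12\phi_1 = \lm\phi_1$ and $+\ii\phi_2' + \tfrac12\phi_2 = \lm\phi_2$, each is a first-order constant-coefficient equation whose solutions are exponentials: $\phi_1(\tt) = c_1 e^{\ii(\lm-\frac12)\tt}$ and $\phi_2(\tt) = c_2 e^{-\ii(\lm-\frac12)\tt}$. Imposing the two boundary conditions at $\tt=\pm\omg$ gives a homogeneous $2\times2$ linear system in $(c_1,c_2)$; setting its determinant to zero yields a quantization condition. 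I expect this condition to reduce, after using $-e^{\pm\ii\omg}$, to $e^{2\ii\lm\cdot 2\omg}=1$ type relation, i.e. $e^{4\ii\lm\omg}$ being a root of unity, producing exactly $\lm_k = \frac{\pi k}{2\omg}$ for $k\in\dZ$; then solving for the corresponding null vector $(c_1,c_2)$ and normalizing in $L^2(\dI_\omg;\dC^2)$ should reproduce~\eqref{eq:defn.phi_k}, including the sign $(-1)^{k+1}$ in the second component. That each eigenvalue is simple and that $\{\phi_k\}$ is an orthonormal basis then follows from self-adjointness together with completeness of the resolvent's eigenbasis.

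\medskip

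\textbf{Action of $\s\cdot\be_{\rm rad}$.} The final identity is a direct verification: using the explicit matrix~\eqref{eq:matrix_identity}, the off-diagonal multiplication by $(\s\cdot\be_{\rm rad})$ swaps the two components of $\phi_k(\tt)$ and multiplies them by $e^{\mp\ii\tt}$, and I would check by inspection that this sends the exponent $\ii(\lm_k-\frac12)\tt$ to $\ii(\lm_{-k}-\frac12)\tt$ and reproduces the sign $(-1)^{k+1}$, giving $(\s\cdot\be_{\rm rad})\phi_k = (-1)^{k+1}\phi_{-k}$. The main obstacle is purely bookkeeping: tracking the signs and the phases $(-1)^{k+1}$ consistently through the boundary-condition determinant and the normalization, so that the stated eigenfunctions~\eqref{eq:defn.phi_k} come out with exactly the claimed form rather than off by a phase. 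There is no deep analytic difficulty, since discreteness and self-adjointness are guaranteed a priori by compactness of the interval.
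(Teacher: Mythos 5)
Your proposal is correct in substance and, for parts (ii) and (iii), follows essentially the same route as the paper: explicit exponential solutions of the two decoupled scalar ODEs, a homogeneous $2\times 2$ boundary system whose determinant ($=2\ii\sin(2\lm\omg)$ in the paper's computation) vanishes exactly at $\lm_k = \frac{\pi k}{2\omg}$, a choice of null vector and normalization producing \eqref{eq:defn.phi_k}, and then the direct matrix computation with \eqref{eq:matrix_identity} together with $\lm_{-k}=-\lm_k$ for part (iii). The genuine difference is in part (i): the paper does not argue directly at all, but identifies $\sfJ_\omg - \frac12$ with a momentum operator on a metric graph with two edges of length $2\omg$, where the flipped boundary conditions become a unitary vertex coupling $\phi^{\rm out} = U\phi^{\rm in}$, and then invokes \cite[Prop.~4.1 and Thm.~5.1]{E12} for self-adjointness and compact resolvent. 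Your route --- vanishing of the boundary form $-\ii\left[\phi_1\overline{\psi_1} - \phi_2\overline{\psi_2}\right]_{-\omg}^{+\omg}$ via unimodularity of $-e^{\pm\ii\omg}$, plus compactness of the embedding $H^1(\dI_\omg;\dC^2)\hookrightarrow L^2(\dI_\omg;\dC^2)$ --- is more elementary and self-contained, at the cost of having to establish self-adjointness (not merely symmetry) by hand; the paper's citation is shorter and makes structurally transparent why exactly these boundary conditions are self-adjoint (they are unitary channel couplings). Two caveats in your sketch deserve fixing. First, the blanket principle that a symmetric first-order system on a compact interval with "a matching boundary condition at each endpoint" is self-adjoint is false in general: the conditions must be unitary (Lagrangian for the boundary form), e.g.\ $\phi_2(\pm\omg)=c_\pm\phi_1(\pm\omg)$ with $|c_\pm|=1$ works, while $|c_\pm|\neq 1$ gives only a symmetric, non-self-adjoint restriction. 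The clean repair within your own plan is to compute $\dom\sfJ_\omg^*$ directly: demanding that the boundary term vanish against arbitrary values $\phi_1(\pm\omg)$ forces $\psi_2(\pm\omg) = -e^{\pm\ii\omg}\psi_1(\pm\omg)$, so the adjoint domain coincides with $\dom\sfJ_\omg$ and you never need to solve the deficiency equations. Second, simplicity of the eigenvalues does not "follow from self-adjointness"; it follows from your explicit computation, since at $\lm=\lm_k$ the $2\times 2$ boundary matrix has nonzero entries, hence rank one and a one-dimensional kernel (the solution space of the ODE system at fixed $\lm$ being two-dimensional). With these repairs your argument is complete and equivalent in strength to the paper's.
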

\begin{proof}
	\noindent (i)
	The operator $\sfJ_\omg - \frac12$ can be viewed as a momentum operator on a graph with two edges of length $2\omg$, in which the vectors $\phi^{\rm out} := \{\phi_1(-\omg), \phi_2(\omg) \}$
	and $\phi^{\rm in} := \{\phi_1(\omg),
	\phi_2(-\omg)\}$ are connected
	as $\phi^{\rm out} = U\phi^{\rm in}$ via the unitary matrix
	\[
	U = 
	\begin{pmatrix} 
	0 & -e^{\ii\omg}\\
	-e^{\ii\omg} & 0
	\end{pmatrix}.
	\] 
	Hence, 
	$\sfJ_\omg - \frac12$ is self-adjoint 
	by~\cite[Prop. 4.1]{E12} and has a compact
	resolvent by~\cite[Thm. 5.1]{E12}.
	Adding a constant $\frac12$ has no impact on these properties and hence the claim follows.

	\medskip
	
	\noindent (ii)
	Let $\phi = (\phi_1,\phi_2)^\top\in\dom\sfJ_\omg$ and $\lambda\in\dR$ be such that $\sfJ_\omg \phi =\lm \phi$.
	The eigenvalue equation on $\phi$ reads as follows
	\[
	\begin{aligned}
		-\ii\phi_1' +\frac{\phi_1}{2}  = \lm\phi_1,\\
		+\ii\phi_2' +\frac{\phi_2}{2}  = \lm\phi_2.
	\end{aligned}
	\]
	The generic solution of the above system of differential equations is given by
	\[
		\begin{cases}
		\phi_1(\tt) 
		= a_1 
		e^{+\ii(\lambda - \frac12) \tt},\\
		\phi_2(\tt) 
		= a_2 
		e^{-\ii(\lambda - \frac12) \tt},
		\end{cases} 
		\qquad a_1,a_2\in\dC.
	\]	
	Hence, the boundary conditions
	yield
	\[
		\begin{cases}
		a_1 e^{+\ii\omg}e^{+\ii(\lambda - \frac12) \omg} + a_2e^{-\ii(\lambda - \frac12) \omg} = 0,\\
		a_1 e^{-\ii\omg}e^{-\ii(\lambda - \frac12) \omg} + a_2e^{+\ii(\lambda - \frac12) \omg} = 0,
		\end{cases}
	\]
	that can be simplified as
	\[
	\begin{cases}
		a_1 e^{+\ii(\lambda + \frac12) \omg} + a_2e^{-\ii(\lambda - \frac12) \omg} = 0,\\
		a_1e^{-\ii(\lambda + \frac12) \omg} + a_2e^{+\ii(\lambda - \frac12) \omg} = 0.
	\end{cases}
	\]
	This system has a non-trivial solution if the corresponding
        determinant vanishes, that is
	\[
		\Delta = 
		 e^{+\ii(\lambda + \frac12) \omg}
		 e^{+\ii(\lambda - \frac12) \omg}
		 -
		 e^{-\ii(\lambda - \frac12) \omg}
		 e^{-\ii(\lambda + \frac12) \omg}
		 = e^{+2\ii\lm\omg} - e^{-2\ii\lm\omg}
		 = 2\ii\sin(2\lm\omg),
	\]
	and consequently the eigenvalues are given by
	\[
		\lm_k = \frac{\pi k}{2\omg},\qquad k\in\dZ.
	\]
	The corresponding eigenvectors can be recovered with the aid of the formula
	\[
		a_1e^{+\ii\left(\frac{\pi k}{2} + \frac{\omg}{2}\right)} + a_2
		e^{-\ii(\frac{\pi k}{2} - \frac{\omg}{2})} = 0
	\]
	which leads to	$a_1 e^{\ii\pi k} + a_2 = 0$. The choice
	\[
		a_1 = \frac{1}{2\sqrt{\omg}},\qquad
		a_2 = \frac{(-1)^{k+1}}{2\sqrt{\omg}}
	\]
	yields the orthonormal basis in \eqref{eq:defn.phi_k}.
	\medskip
	
	\noindent (iii) 
	Using~\eqref{eq:matrix_identity} we obtain
	\[
	\begin{aligned}
			(\s\cdot \be_{\rm rad}) \phi_k & =
		\frac{1}{2\sqrt{\omg}}
			\begin{pmatrix}
			0           & e^{-\ii\tt}\\
			e^{\ii\tt}  &0
			\end{pmatrix} 
				\begin{pmatrix}
			e^{+\ii(\lm_k - \frac12)\tt}
			\\
			(-1)^{k+1}
			e^{-\ii(\lm_k - \frac12)\tt}	
			\end{pmatrix}\\
			& =
		\frac{1}{2\sqrt{\omg}}
			\begin{pmatrix}
			(-1)^{k+1} e^{-\ii(\lm_k
				+\frac12)\tt}\\
			e^{\ii(\lm_k + \frac12)\tt}
			\end{pmatrix}\\
			& =
			\frac{1}{2\sqrt{\omg}}(-1)^{k+1}
			\begin{pmatrix}
			 e^{\ii(\lm_{-k}
				-\frac12)\tt}\\
			(-1)^{k+1}e^{-\ii(\lm_{-k} - \frac12)\tt}
			\end{pmatrix} = (-1)^{k+1}\phi_{-k}.
			\qedhere
	\end{aligned}
	\]
\end{proof}
Further, we employ the orthogonal decomposition
\[
	L^2_\cyl(\cS_\omg; \dC^2\big)\simeq 
	L^2_r(\dR_+)\otimes L^2(\dI_\omg;\dC^2) 
	= 
	\oplus_{k\in\dN_0} \cE_k,
\]
where 
$\cE_0 = L^2_r(\dR)\otimes\cF_0$ and
$\cE_k = L^2_r(\dR)\otimes (\cF_k\oplus\cF_{-k})$ for $k \in\dN$.
In the following proposition we show that
$\cE_k$ are reducing subspaces for $\wOp$.
The analysis of $\wOp$ {boils} down to the study of its restrictions to these subspaces.
For the sake of convenience, we introduce the unitary transforms $\sfW_0 \colon \cE_0
\arr L^2(\dR_+)$
and
$\sfW_k \colon \cE_k
\arr L^2(\dR_+;\dC^2)$ for $k \in\dN$
as
\[
	(\sfW_0 u)(r) :=
	\sqrt{r}\,
	\big(u(r,\cdot),\phi_0\big)_{\dI_\omg},
	\quad
	(\sfW_k u)(r) :=
	\sqrt{r}
	\begin{pmatrix}
	(u(r,\cdot),\phi_k)_{\dI_\omg}\\
	\ii (u(r,\cdot),\phi_{-k})_{\dI_\omg}
	\end{pmatrix}.
\]
\begin{prop}\label{prop:decomp}
 	For any $k\in\dN_0$,   
	\[
		d_k u := \wOp u,
		\qquad
		\dom d_k := \dom\wOp \cap \cE_k,
	\]
	is a well-defined operator in the Hilbert space $\cE_k$. 

	The operator $d_0$ is unitarily equivalent
	via $\sfW_0$ to the operator $\bd_0$ in the Hilbert space $L^2(\dR_+)$ defined as
	\begin{equation}\label{eq:op_dk}
		\bd_0 \psi :=\ii\psi',\qquad 
		\dom\bd_0 :=		H^1_0(\dR_+).
	\end{equation}

	For any $k \in \dN$, the operator $d_k$ is unitarily equivalent via
	$\sfW_k$ to the operator $\bd_k$ in the Hilbert space $L^2(\dR_+)$ defined as
	\begin{equation}\label{eq:op_dk}
		\bd_k :=
		(-1)^{k+1}
		\begin{pmatrix}
			0 	& -\frac{\dd}{\dd r} - \frac{\pi k}{2\omg r}\\
			\frac{\dd}{\dd r} - \frac{\pi k}{2\omg r}	&0
		\end{pmatrix},\qquad
		\dom{\bd_k}  := H^1_0(\dR_+;\dC^2).
	\end{equation}

	In particular, the decomposition
	\[
		\Op \simeq \bigoplus_{k\in\dN_0} \bd_k
	\]
	holds and the deficiency indices
	of $\Op$ can be computed as	$n_\pm(\Op)  = \sum_{k\in\dN_0} n_\pm(\bd_k)$.
\label{prop:fibdecdisk}
\end{prop}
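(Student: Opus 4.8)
The plan is to exploit the angular orthonormal basis $\{\phi_k\}_{k\in\dZ}$ furnished by Proposition~\ref{prop:K} and to read the fiber operators directly off the expression~\eqref{eqn:expopcyl} for $\wOp$. First I would verify that each $\cE_k$ is reducing for $\wOp$, so that $d_k$ is a well-defined operator in $\cE_k$. This rests on two facts from Proposition~\ref{prop:K}: the angular operator $\sfJ_\omg = -\ii\s_3\p_\tt + \tfrac12$ is diagonal in the basis, $\sfJ_\omg\phi_{\pm k} = \pm\lm_k\phi_{\pm k}$, while the matrix $\s\cdot\be_{\rm rad}$ sends $\phi_k$ to $(-1)^{k+1}\phi_{-k}$. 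Hence both angular operations preserve the one-dimensional space $\cF_0$ and each two-dimensional space $\cF_k\oplus\cF_{-k}$, whereas $\p_r$ and multiplication by $r^{-1}$ act only on the radial factor; consequently the orthogonal projection onto $\cE_k$ commutes with $\wOp$ on its domain.

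The second step is the explicit computation of the action. For $k=0$ I write $v = f(r)\phi_0$; since $\lm_0 = 0$ and $\s\cdot\be_{\rm rad}\phi_0 = -\phi_0$, expression~\eqref{eqn:expopcyl} collapses to $\wOp v = \ii\big(f' + \tfrac{f}{2r}\big)\phi_0$, and the substitution $\psi = \sqrt r\,f$ (which is precisely $\sfW_0 v$) makes the weight $\sqrt r$ absorb the zeroth-order term $\tfrac{1}{2r}$, yielding $\bd_0\psi = \ii\psi'$ as in the statement. For $k\ge 1$ I write $v = f(r)\phi_k + g(r)\phi_{-k}$; feeding the eigenvalue relations and the swap $\phi_{\pm k}\mapsto(-1)^{k+1}\phi_{\mp k}$ into~\eqref{eqn:expopcyl} produces two decoupled radial expressions, and the substitution $\psi_1 = \sqrt r\, f$, $\psi_2 = \ii\sqrt r\, g$ (the factor $\ii$ ensuring that the off-diagonal entries are the real first-order expressions of the statement) reproduces the matrix operator $\bd_k$, again after the $\sqrt r$ weight cancels the $\tfrac{1}{2r}$ terms. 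These are routine first-order manipulations.

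The delicate point is the identification of domains, namely that $\sfW_k$ maps $\dom\wOp\cap\cE_k$ onto $H^1_0(\dR_+)$ (resp. $H^1_0(\dR_+;\dC^2)$). The angular boundary conditions $v_2(\cdot,\pm\omg) = -e^{\pm\ii\omg}v_1(\cdot,\pm\omg)$ are automatic, since every $\phi_k\in\dom\sfJ_\omg$ already satisfies them, so no constraint at $\tt=\pm\omg$ survives in the fiber. The regularity conditions $v,\p_r v, r^{-1}\p_\tt v\in L^2_\cyl$ defining $H^1_\cyl$ translate, after the substitution, into $\psi_j\in H^1(\dR_+)$ together with the Hardy-type bound $\int_0^\infty|\psi_j|^2\,r^{-2}\,\dd r<\infty$ coming from the term $r^{-1}\p_\tt v$, which is genuinely present because $\p_\tt\phi_{\pm k}\ne 0$ (in particular $\p_\tt\phi_0\ne 0$ already in the scalar fiber); for an $H^1$ function this bound is equivalent to $\psi_j(0)=0$, hence $\psi_j\in H^1_0(\dR_+)$. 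The converse inclusion is obtained by reconstructing $v$ from $\psi$ and checking membership in $H^1_\cyl$. I expect this domain analysis to be the main obstacle, as it requires the Hardy inequality and care that no boundary term at the origin is lost.

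Finally, assembling the fibers: since the $\cE_k$ are mutually orthogonal reducing subspaces with $\oplus_{k\in\dN_0}\cE_k = L^2_\cyl(\cS_\omg;\dC^2)$ (a Parseval statement for the basis $\{\phi_k\}$), one has $\wOp = \bigoplus_{k\in\dN_0} d_k \simeq \bigoplus_{k\in\dN_0}\bd_k$, and via the unitary $\sfV$ of~\eqref{eqn:unitequivD} the same holds for $\Op$. Because the defect subspaces of a symmetric operator split along an orthogonal decomposition into reducing subspaces, $\ker(\Op^*\mp\ii) = \bigoplus_{k\in\dN_0}\ker(\bd_k^*\mp\ii)$, whence $n_\pm(\Op) = \sum_{k\in\dN_0} n_\pm(\bd_k)$, which completes the proof.
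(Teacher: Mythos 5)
Your proposal is correct and follows essentially the same route as the paper's proof: decomposition along the angular eigenbasis of $\sfJ_\omg$, the substitution $\psi = \sqrt{r}\,f$ (with the extra factor $\ii$ in the second component for $k\ge 1$) so that the weight absorbs the $\tfrac{1}{2r}$ term, identification of the fiber domains with $H^1_0$ via the Hardy-type bound $\int_0^\infty |\psi_j|^2 r^{-2}\,\dd r<\infty$ forcing $\psi_j(0)=0$ (the paper does this step by citing trace results of Cassano--Pizzichillo plus the one-dimensional Hardy inequality, while your direct argument via continuity of $H^1$ functions is equally valid), and assembly of the fibers with the standard splitting of defect subspaces. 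The only detail the paper makes explicit that you elide is the orthogonality $(\phi_k',\phi_{-k}')_{\dI_\omg}=0$, which is needed for $k\ge 1$ so that the condition $r^{-1}\p_\tt v\in L^2_\cyl$ decouples, without cross terms, into separate Hardy bounds on $\psi_1$ and $\psi_2$.
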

\begin{proof}
\noindent \underline{{\it Step 1:} $k=0$}.
Pick a function $u\in\dom\wOp\cap \cE_0$. By definition, $u$ writes as
\[
	u(r,\tt) = \frac{\psi_0(r)}{\sqrt{r}}\phi_0(\tt),
\]
with some $\psi_0\colon\dR_+\arr\dC$.
Next, we observe that
$u\in H_{\rm cyl}^1(\cS_\omg;\dC^2)$
is equivalent to $u,\p_r u, \frac{\p_\tt u}{r} \in L^2_r(\dR_+)$, which is, in its turn, equivalent to $\psi_0,(\frac{\psi_0}{\sqrt{r}})'\sqrt{r},
\frac{\psi_0}{r}\in L^2(\dR_+)$.
Now, we aim at showing the following equivalence
\begin{equation}\label{eq:equivalence_u_psi}
	u\in H^1_{\rm cyl}(\cS_\omg;\dC^2)
	\quad\Longleftrightarrow\quad\psi_0 \in
		H^1_0(\dR_+).
\end{equation} 	
First, we obtain that
\[
	\psi_0' =
	\left(\frac{\psi_0}{\sqrt{r}}\right)'\sqrt{r} + \frac12\frac{\psi_0}{r} \in L^2(\dR_+). 
\]
Hence, 
$u\in H^1_{\rm cyl}(\cS_\omg;\dC^2)$  implies  $\psi_0 \in
H^1(\dR_+)$.
Moreover, thanks e.g. to~\cite[Prop. 2.2 {\rm (i)}]{CP2018} (with $a=0$ settled there)
we infer that there exists $\mathsf{p}\in\dC$ such that
\begin{equation*}
  \lim_{r\to 0^+} |\psi_0(r) - \mathsf{p}| r^{-1/2}= 0,
\end{equation*}
and, according to ~\cite[Prop. 2.4 {\rm (i)}]{CP2018} (for $a=0$), we 
obtain that $\frac{\psi_0 - \mathsf{p}}{r} \in L^2(\dR_+)$. Since
$\frac{\psi_0}{r}\in L^2(\dR_+)$, 
we get that $\mathsf{p} =0$. Hence,
by the Sobolev trace theorem
we obtain that $\psi_0 \in H^1_0(\dR_+)$. The
reverse implication in~\eqref{eq:equivalence_u_psi} immediately
follows from the one-dimensional Hardy inequality; see e.g. \cite[Prop. 2.4 {\rm (i)}]{CP2018}.

Applying the differential expression obtained in~\eqref{eqn:expopcyl} to $u$, we get
\begin{equation}\label{eq:k=0}
	(\wOp u)(r,\theta) 	
	=
	-\ii(\s\cdot \be_{\rm rad})\phi_0(\tt)
	\left(\p_r\left(
		\frac{\psi_0(r)}{\sqrt{r}}\right) 
		+ 
		\frac{\psi_0(r)}{2r^{3/2}}
	\right)=
	\ii \frac{\psi'_0(r)}{\sqrt{r}}\phi_0(\tt).
\end{equation}

\medskip

\noindent 
\underline{{\it Step 2:} $k\in\dN$.}
Pick a function $u\in\dom\wOp\cap \cE_k$. By definition, $u$ writes as
\[
	u(r,\tt) =
	\frac{\psi_{+k}(r)}{\sqrt{r}}\phi_{k}(\tt)
	-\ii
	\frac{\psi_{-k}(r)}{\sqrt{r}}\phi_{-k}(\tt)
	,
\]
with some $\psi_{\pm k}\colon\dR_+\arr\dC$.
Observe that
\begin{equation}\label{eq:orthophi}
	\begin{aligned}
	(\phi_k',\phi_{-k}')_{\dI_\omg}
	&=
	-((\ii\s_3)^2\phi_k',\phi_{-k}')_{\dI_\omg}
	 =
	(-\ii\s_3\phi_k',-\ii\s_3\phi_{-k}')_{\dI_\omg}\\
	& =
	\left(\lm_k - \frac12\right)\left(\lm_{-k}-\frac12\right)
	(\phi_k,\phi_{-k})_{\dI_\omg}
	=0.
	\end{aligned}
\end{equation}
Again, $u\in H_{\rm cyl}^1(\cS_\omg;\dC^2)$
is equivalent to $u,\p_r u, \frac{\p_\tt u}{r} \in L^2_r(\dR_+)$.
Taking into account orthogonality~\eqref{eq:orthophi},
$u\in H_{\rm cyl}^1(\cS_\omg;\dC^2)$
is equivalent to
$\psi_{\pm k},(\frac{\psi_{\pm k}}{\sqrt{r}})'\sqrt{r},
\frac{\psi_{\pm k}}{r}\in L^2(\dR_+)$
and as in the case $k=0$
we end up with equivalence between $u\in H^1_{\rm cyl}(\cS_\omg;\dC^2)$ and $\psi_{\pm k} \in H^1_0(\dR_+)$.
Applying the differential expression obtained in \eqref{eqn:expopcyl}, we get
\begin{equation}\label{eq:kne0}
	\begin{aligned}
	\wOp u
	&= 
	-
	\frac{\ii(\s\cdot \be_{\rm rad})}{\sqrt{r}}
	\left[\phi_k
	\left(
		\p_r\psi_{k}
		-\frac{\lm_k\psi_k}{r}
	\right) 
	-\ii
	\phi_{-k}
	\left(
	\p_r\psi_{-k}  
		-\frac{\lm_{-k}\psi_{-k}}{r}
	\right)\right]  
\\
	&= 
	\frac{(-1)^{k+1}}{\sqrt{r}}
	\left[
	-\ii\phi_{-k}
	\left(\p_r\psi_{k}  
	-\frac{\lm_k\psi_k}{r}\right)
	+
	\phi_k
	\left(
	-\p_r\psi_{-k}  
	-\frac{\lm_{k}\psi_{-k}}{r}
	\right)\right].
	\end{aligned}
\end{equation}
\noindent \underline{{\it Step 3:} Conclusion of the proof.}
The analysis in Steps 1 and 2
yields that the inclusion $\wOp \left(\dom\wOp\cap\cE_k\right) \subset \cE_k$
holds
for all $k\in\dN_0$. Hence, the operators
$d_k$ are symmetric for all $k\in\dN_0$.
Relying on formulae~\eqref{eq:k=0} and~\eqref{eq:kne0} we find that
\[
	\sfW_k d_k\sfW^{-1}_k = \bd_k,
	\qquad\forall\,k\in\dN_0. \qedhere
\]
\end{proof}

\section{Proofs of the main results}
\label{sec:proof}

With all the preparations above the proofs of the main results are rather
compact.

\begin{proof}[Proof of Theorem~\ref{thm1}]

For all $k\in\dN$ the operators $\bd_k$ are self-adjoint thanks to
\cite[Thm.~1.1\,(i) and Prop.~3.1\,(i)]{CP2018},
since for all $k \in \dN$ we have $\gamma:= \left|\frac{k \pi}{2\omega}\right|>\frac12$.

By a direct computation
it is elementary to observe  that
\[
	\bd_0^* = \ii \psi',\qquad\dom\bd_0^* = H^1(\dR_+).
\]
Hence,
\[
	\ker(\bd_0^* - \ii) = \{0\}
	\quad\text{and}\quad\ker(\bd_0^* + \ii) = \spn\{e^{-r}\}.	
\]
The deficiency indices of $\bd_0$ are given by $(0,1)$ and the corresponding defect element is $\psi_\star(r) = e^{-r}$. 
Hence, by Proposition~\ref{prop:decomp}
the operators $\wOp$ and $\Op$ have deficiency
indices $(0,1)$ as well and the
defect element of $\sfD_\omg$ is given
in polar coordinates by
\[
	u_\star(r,\tt) =
	\big(\sfW_0^{-1}\psi_\star\big)(r,\tt) =
	\frac{e^{-r}}{\sqrt{r}}\phi_0(\tt) =
	\frac{1}{2\sqrt{\omg}}
	\frac{e^{-r}}{\sqrt{r}}
	\begin{pmatrix}
	e^{-\frac{\ii\tt}{2}}
	\\
	-
	e^{\frac{\ii\tt}{2}}	
	\end{pmatrix}.\qedhere
\] 
\end{proof}
\begin{proof}[Proof of Theorem~\ref{thm2}]
Since the operator $\Op$ has deficiency indices
$(0,1)$, the operator $-\Op$ has deficiency
indices $(1,0)$, respectively, and moreover
$\ker(\Op^* +\ii) = \ker((-\Op)^* - \ii) = \spn\{u_\star\}$. 
Therefore, the deficiency indices of the operator $\sfM_\omg = \Op\oplus (-\Op)$ are  $(1,1)$ and its defect subspaces are {given by}
\[
	\ker(\sfM_\omg^* -\ii) = 
	\spn\left\{\begin{pmatrix} 0 \\ u_\star\end{pmatrix} \right\} \qquad\text{and}\qquad
	\ker(\sfM_\omg^* + \ii) = 
	\spn\left\{\begin{pmatrix} u_\star\\ 0\end{pmatrix} \right\}. 
\]
Hence, by~\cite[Thm. X.2]{reedsimon2} all the self-adjoint extensions
of $\sfM_\omg$ are parametrized by
$\alpha\in\dT$ as follows
\[
	\sfM_{\alpha,\omg} := 
	\begin{pmatrix}
		\cD u_1 + \ii u_\star\\
		-\cD u_2 -\ii \alpha u_\star
	\end{pmatrix}, 
	\quad
	\dom\sfM_{\alpha,\omg} := 
	\left\{
		\begin{pmatrix}u_1\\ u_2\end{pmatrix}
		 + 
		\begin{pmatrix} 
		u_\star \\ 
		\alpha u_\star
		\end{pmatrix}
		\colon
		u_1,u_2\in \dom\Op
	\right\},
\]
by which the proof is concluded.
\end{proof}
\subsection*{Acknowledgement}
The authors are grateful for the support by the grant No.~17-01706S of the Czech Science Foundation (GA\v{C}R).

\newcommand{\etalchar}[1]{$^{#1}$}

\end{document}